\documentclass[10pt, conference]{IEEEtran}
\IEEEoverridecommandlockouts
\usepackage{amsfonts,amsmath,amssymb}
\newtheorem{thm}{Theorem}%[section]
\newtheorem{cor}{Corollary}
\newtheorem{lem}{Lemma}
\newtheorem{prop}{Proposition}

\newtheorem{defn}{Definition}%[section]
\newtheorem{example}{Example}%[subsection]
\newtheorem{remark}{Remark}
\newcommand{\R}{\mathbb{R}}
\newcommand{\C}{\mathbb{C}}
\newcommand{\F}{\mathbb{F}}
\newcommand{\N}{\mathbb{N}}
\newcommand{\eps}{\epsilon}
\author{ \authorblockN{Ehsan Ardestanizadeh\thanks{ E.~Ardestanizadeh
      was supported in part by NSF Career Award CNS-0533035, and by the
      Center for Networked Systems grant CNS0512 at UCSD. Work done while
      he was visiting Laboratory of Algorithms (ALGO) at EPFL. 
      M.~Cheraghchi was supported by Swiss NSF grant
      200020-115983/1. }}
  \authorblockA{%Department of Electrical Engineering\\
    ECE, UCSD\\
    La Jolla, CA, 92093-0407, USA \\
    eardesta@ucsd.edu} \and \authorblockN{Mahdi Cheraghchi}
  \authorblockA{%Lab. of Math. Algorithms \\
    ALGO, EPFL\\
    1015 Lausanne, Switzerland\\
    mahdi.cheraghchi@epfl.ch } \and \authorblockN{Amin Shokrollahi}
  \authorblockA{%Lab. of Math. Algorithms \\
    ALGO/LMA, EPFL\\
    1015 Lausanne, Switzerland\\
    amin.shokrollahi@epfl.ch } }
%\author{Ehsan Ardetsanizadeh \\ Department of Electrical and Computer Engineering\\
%University of California, San Diego\\
%La Jolla, CA, 92093-0407, USA \\
%Ecole Polytechnique Federal de Lausanne \\ 1015 Lausanne, Switzerland\\
%\{mahdi.cheraghchi, amin.shokrollahi\}@epfl.com }
%\author[*]{Ehsan Ardestanizadeh}
%\author[**]{Mahdi Cheraghchi}
%\author[**]{Amin Shokrollahi}
%\affil[*]{fgdga}
%\affil[**]{jg}
\date{}

\title{Bit Precision Analysis for Compressed Sensing}

%\author{Ehsan}
%\authorblockA{Department of Electrical and Computer Engineering\\
%University of California, San Diego\\
%La Jolla, CA, 92093-0407, USA \\
%E-mail: \{eardesta\}@ucsd.edu }}
%\thanks{This work was not supported by any grant.}
% \setlength{\oddsidemargin}{0in}
% \setlength{\evensidemargin}{0in}
% \setlength{\topmargin}{0in}
% \setlength{\textwidth}{6.5in}
% \setlength{\textheight}{8in}
% \setlength{\hoffset}{0in}
% \setlength{\voffset}{0in}
% \setlength{\headheight}{0in}
% \setlength{\headsep}{0in}
\begin{document}
\maketitle
\begin{abstract}
  This paper studies the stability of some reconstruction algorithms
  for compressed sensing in terms of the \emph{bit
    precision}. Considering the fact that practical digital systems
  deal with discretized signals, we motivate the importance of the
  total number of accurate bits needed from the measurement outcomes
  in addition to the number of measurements.  It is shown that if one
  uses a $2k \times n $ Vandermonde matrix with roots on the unit
  circle as the measurement matrix, $O(\ell + k \log \frac{n}{k})$
  bits of precision per measurement are sufficient to reconstruct a
  $k$-sparse signal $x \in \R^n$ with \emph{dynamic range} (i.e., the
  absolute ratio between the largest and the smallest nonzero
  coefficients) at most $2^\ell$ within $\ell$ bits of precision,
  hence identifying its correct support. Finally, we obtain an upper
  bound on the total number of required bits when the measurement
  matrix satisfies a restricted isometry property, which is in
  particular the case for random Fourier and Gaussian matrices.  For
  very sparse signals, the upper bound on the number of required bits
  for Vandermonde matrices is shown to be better than this general
  upper bound.
\end{abstract}
% This paper studies the stability analysis of syndrome decoding
% algorithm for Reed-Solomon codes when it is exploited for signal
% reconstruction in compressed sensing. The goal is to sense a
% $k$-sparse signal $x \in \R^n$ with dynamic range of $2^\ell$
% through quantized observations. It is shown that if one uses a $2k
% \times 2k $ Vandermonde matrix with roots on the unit circle as the
% measurement matrix, $2\ell+O(k \log \frac{n}{k} )$ bit precision per
% measurement is enough to reconstruct the signal with correct
% support. Considering the fact that dealing with real numbers need
% infinitely many bits and any practical system works with finite
% precision, it is shown that the total number of precision bits in
% the measurements is a more practical measure for optimality of an
% algorithm rather than the total number of measurements. A comparison
% between Vandermonde matrices combined with syndrome decoding and
% random Fourier measurements combined with $\ell_1$ minimization
% shows that in the regime where $k=O(\log n)$ less total number of
% measurement bits is sufficient for the former algorithm to ensure
% correct support recovery.

\section{Introduction}

Compressed sensing is an emerging field which deals with new sampling
techniques for sparse signals. The goal is to exploit the sparsity of
the signal and try to reconstruct the signal using a number of linear
measurements far below the signal's dimension. Formally, let $x \in
\R^n$ be a $k$-sparse vector ($k \ll n$) and $y = Ax,$
% \begin{equation*}
%   y=Ax,
% \end{equation*}
where $A$ is an $m \times n$ measurement matrix with possibly complex
entries, and $y \in \C^m$ is the observed vector. The main problem is
to design a measurement matrix with $m \ll n$ for which there exists
an efficient reconstruction algorithm that is able to reconstruct any
$k$-sparse signal from the measurement $y$. It is easy to check that
at least $2k$ measurements are required if \emph{all} $k$-sparse
signals have to be distinguishable. However, it can be shown that
$k+1$ linear measurements with random coefficients are enough to
reconstruct \emph{almost} all signals almost surely
\cite{l_0,l_00}. To reconstruct the signal we need to solve the
following $\ell_0$ minimization problem

% \begin{lem}\label{rank}
%   If matrix $A_{m\times n}$ with $m < n$ is able to distinguish all
%   $k$-sparse signals then $m \geq 2k$.
% \end{lem}
% \begin{proof}
%   Let $x_1$ and $x_2$ be two $k$-sparse signals. Then $x_1 - x_2$
%   would be a $2k$-sparse signal. If we do not want to confuse the
%   two signals then we need to have $A(x_1-x_2) \neq 0$. Therefore
%   any $2k$ columns of $A$ have to be linearly independent and
%   $rank(A) \geq 2k$. On the other hand we know that rank(A) $\leq
%   \min(m,n)=m$, so $m \geq 2k$
% \end{proof}
% Lemma \ref{rank} states that for $A$ to be good matrix for sensing
% $k-$sparse vectors we need at least $2k$ measurements.
\begin{equation*}
  (P_0)    \qquad   \min \|\hat{x}\|_{0},   \qquad  \qquad A \hat{x}=y,    %\label{p0}
\end{equation*}
which is known to be $\mathsf{NP}$-hard for a generic matrix $A$.

A key observation by Candes et al.\ \cite{Candes,Candes1,Candes2} and
Donoho \cite{Donoho} shows that if matrix $A$ obeys a so-called
\emph{restricted isometric property} (RIP)\footnote{This property was
  originally called \emph{uniform uncertainty principle} (UUP) by
  Candes and Tao .}, which essentially requires that any set of up to
$k$ columns of $A$ behaves close to an orthonormal system, then the
signal can be exactly reconstructed using the following $\ell_1$
minimization program
\begin{equation*}
  (P_1)  \qquad     \min \|\hat{x}\|_1,   \qquad  \qquad   A \hat{x}=y.  %\label {P_1}
\end{equation*}

This is an easier problem compared to $\ell_0$ minimization and in
particular can be solved in polynomial time %$O(n^3)$
using linear programming (LP) techniques. There are families of random
matrices which satisfy the RIP with high probability if $m$, the
number of rows, is large enough. Two examples of such families are
given by random Gaussian measurements and random Fourier
measurements. If $A$ is a random matrix with i.i.d.\ Gaussian entries
and $m=O(k\log \frac{n}{k})$, or if $A$ is constructed from $m=O(k\log
n)$ random rows of the $n \times n$ discrete Fourier transform matrix,
then the matrix can be shown to satisfy the RIP with high probability
\cite{Candes}. Even though the results for random matrices hold with
high probability, there is no known efficient way to verify if a
random matrix satisfies the RIP.  This motivates the problem of
finding an \emph{explicit} construction of a measurement matrix $A$
with small number of measurements for which we can solve $(P_0)$
efficiently. It is shown that explicit matrices can be constructed
based on group testing techniques \cite{Group} as well as expander
graphs and randomness extractors \cite{Hassibi1, Hassibi2, Jafarpour,
  Indyk, Indyk2}.

A closer inspection reveals that $(P_0)$ is an analog of the so-called
\emph{syndrome decoding} problem over real or complex numbers, and
hence it is natural to expect that known techniques from coding theory
might be applicable in compressed sensing. In particular, Akcakaya and
Tarokh \cite{Tarokh} show that several results known for the
Reed-Solomon codes over finite fields can be extended to the field of
complex numbers. Therefore, similar coding and decoding algorithms can
be used for sensing sparse vectors over the real or complex
field. Specifically, they show that it is possible to reconstruct any
$k$-sparse vector from only $2k$ measurements, which is the minimum
number of measurements one can hope for, using $O(n^2)$ arithmetic
operations.

% While the above results are stated assuming that the signal $x$ is
% sparse in the canonical basis, it is easy to see that they can be
% easily generalized to sparse vectors in arbitrary bases. Namely,
% suppose that the vector $x$ is known to be sparse in some basis
% $\{\psi_i\}$. Let ${\Psi}$ denote the $n \times n$ matrix with
% $\psi_i$ as its $i$th row and $A$ be a measurement matrix for sparse
% vectors in the canonical basis. Then $\tilde{x}:={\Psi}x$ would be
% sparse (in the canonical basis) and we can use $A$ to identify
% $\tilde{x}$. Thus, the measurement matrix $A\Psi$ can be used on $x$
% to identify $\tilde{x}$, from which we can obtain
% $x=\Psi^{-1}(\tilde{x})$.

All the above results hold under the assumption that measurements and
arithmetic over real numbers are carried out precisely. However, in
digital systems we generally cannot deal with real numbers simply
because we would need infinitely many bits to represent a real
number. So it is inevitable to resort to truncated representations of
real vectors. Thus a natural question to ask is how precise the
measurement outcomes need to be so as to be able to reconstruct the
original data within a \emph{target precision}. In
Section~\ref{bitprecision}, we will show how it becomes important to
not only take the total number of measurements into account, but also
the precision required from individual measurements. Together, these
two quantities give a suitable measure of the amount of
\emph{information} (in bits) that needs to be extracted from the
measurements in order to approximate the sparse signal. We will use a
simple example to justify the point that if the precision of the
measurements is allowed to be sufficiently high, even \emph{one}
measurement is sufficient to reconstruct discrete signals.

The main result of this paper is a bit precision analysis of the
syndrome decoding algorithm for Reed-Solomon codes when applied in the
context of compressed sensing as a reconstruction algorithm for
Vandermonde measurement matrices \cite{Tarokh}. The analysis is based
on the assumption that the sparse signal is to be reconstructed within
a certain chosen precision in the fixed-point model and the additional
requirement that the support of the reconstructed vector is the same
as that of the original signal. In particular, we show that if the
dynamic range of $x$ is at most $2^\ell$ then having each measurement
available within $O(\ell+k\log \frac{n}{k})$ bits of precision is
sufficient to identify $x$ within $\ell$ bits, which is the minimum
precision needed to ensure that the smallest nonzero entry of $x$ is
not confused with zero. Since we have a total of $2k$ measurements,
the total number of bits required from the measurement outcomes is
upper bounded by $O(\ell k+k^2\log \frac{n}{k})$.
% A comparison shows that in the regime where $k=O(\log n)$ this is
% less than the upper bound on the total number of required bits for
% random Fourier measurements combined with $\ell_1$ minimization .
% Note that we need at least $\ell$ bit precision for nonzero entries
% so that we do not miss any of them and recover $x$ with the right
% support, and also we need $O(k\log \frac{n}{k})$ bits to represent
% the places of the nonzero elements, so $kl+O(k\log \frac{n}{k})$ .
% , which is close to the theoretical lower bound $O(kl+k\log
% \frac{n}{k})$ if $k \ll n$. The

The rest of the paper is organized as follows. First, in
Section~\ref{bitprecision} we motivate the total bit precision as a
practical measure for assessing the quality of compressed sensing
algorithms. Then, in Section~\ref{definition} we give a more rigorous
definition of the problem that we consider and state our stability
theorem for Vandermonde measurements and the syndrome decoding
algorithm. Section~\ref{stability} gives the sketch of the proof for
the stability theorem and in Section~\ref{comparison} we will upper
bound the total number of bits required from the measurements obtained
from matrices satisfying certain restricted isometry properties.
% a comparison between syndrome decoding and $\ell_1$
% minimization. %Finally, Section \ref{conclusion} concludes the paper.

\section{The Importance of Bit Precision}\label{bitprecision}

The main purpose of this work is to show that while the number of
measurements is an important criterion for assessing the quality of a
compressed sensing scheme, it is by itself insufficient without
considering the precision needed for the reconstruction algorithm to
work properly. Indeed, a more favorable approach than simply bounding
the number of measurements would be to quantify the total amount of
\emph{information} (in bits) that needs to be extracted from the
measurement outcomes so as to enable a reliable reconstruction of the
original signal within a pre-specified precision.  Intuitively, a
single real number can pack an infinite amount of information and for
virtually all real world applications, either the signal to be
measured is \emph{a priori} known to be discrete (for example, the
output of a sensor measuring the temperature over a long period of
time), or is only needed within a certain pre-specified number of
accurate bits. For all such cases, a single measurement is in
principle capable to carry all the needed information. The following
example illustrates this point.

\begin{example}
  Suppose that $A$ is an $m \times n$ binary matrix that allows
  recovery of $k$-sparse vectors over $\F_2$.  Such a matrix can be
  obtained from a parity check matrix of a binary code with minimum
  distance at least $k+1$.  We now ``compress'' the matrix $A$ into a
  vector $a = (a_1, \ldots, a_n) \in \R^n$ such that any $k$-sparse $x
  \in \{0,1\}^n$ can be exactly reconstructed from $a \cdot x \in
  \R$. We define
  \[
  a_i := \sum_{j=1}^{m} A(j, i) \cdot 2^{j(\lceil \log (k+1) \rceil)},
  \]
  where $A(j, i)$ denotes the the entry of $A$ at the $j$th row and
  the $i$th column.  This vector simply encodes all the rows of the
  matrix $A$ by shifting each row by a sufficient amount to prevent
  any confusion, and a moment's thought reveals that indeed $x$ can be
  uniquely reconstructed\footnote{ Using a similar construction, it is
    also easy to see that, allowing infinite precision in the
    measurements, it is information theoretically possible to uniquely
    identify any (not necessarily sparse) discrete vector $x \in \N^n$
    using only one real linear measurement.  } from $a \cdot x$.
  However, by a simple counting, the number of rows of $A$ has to be
  at least $\log \binom{n}{k} = \Omega(k \log(n/k))$, and we need at
  least $m \times \lceil \log (k+1) \rceil$ bits from $a \cdot x$ to
  be able to reconstruct $x$.  Hence, although the number of
  measurements is extremely low, the total number of bits that we need
  to extract from the measurement has to be at least $\Omega(k \log k
  \log(n/k))$.
\end{example}

In general, a counting argument shows that
% \emph{simple}
matrices with entries from a small domain cannot be used to bring down
the number of measurements below a certain level. This is captured in
the proposition below:
% \footnote{ Here we only consider matrices with integer entries, but
%   obviously the argument also holds for matrices with rational
%   entries as they can be transformed to integer matrices by an
%   appropriate scalar multiplication. }:

\begin{prop}
  Let $A$ be an $m \times n$ matrix whose entries are integers in
  range $[-2^\ell, 2^\ell]$.  Assume that $A$ can be used for
  reconstruction of $k$-sparse signals in $\R^n$. Then $m =
  \Omega\left(\frac{k \log (n/k)}{\ell + \log k}\right)$.
\end{prop}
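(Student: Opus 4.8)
The plan is to prove this lower bound by a counting (pigeonhole) argument that exploits the bounded integer entries of $A$: if we feed in a suitable finite family of integer signals, the measurement vector $Ax$ is forced to live in a finite, explicitly bounded set, and injectivity of the measurement map caps the size of that input family.

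First I would unpack the hypothesis. Saying that $A$ can be used to reconstruct $k$-sparse signals means exactly that the linear map $x \mapsto Ax$ is injective on the set of $k$-sparse vectors of $\R^n$; otherwise two distinct $k$-sparse signals would yield identical measurements and be indistinguishable. In particular, this injectivity persists on any subfamily of $k$-sparse vectors. I would then choose the subfamily to be the $0/1$ vectors with exactly $k$ ones, of which there are $\binom{n}{k}$, and which are all $k$-sparse, hence pairwise separated by $A$. The next step is to bound the number of distinct images. For such an $x$, each coordinate $(Ax)_j = \sum_{i \in \mathrm{supp}(x)} A(j,i)$ is a sum of $k$ integers of absolute value at most $2^\ell$, so it is an integer in $[-k 2^\ell, k 2^\ell]$ and thus takes at most $2k 2^\ell + 1$ values. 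Consequently $Ax$ ranges over at most $(2k 2^\ell + 1)^m$ distinct vectors.

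Finally, injectivity forces $\binom{n}{k} \le (2k 2^\ell + 1)^m$. Taking logarithms and using the standard estimate $\log \binom{n}{k} \ge k \log (n/k)$ together with $\log(2k 2^\ell + 1) = O(\ell + \log k)$ gives $k \log(n/k) \le m \cdot O(\ell + \log k)$, which rearranges to $m = \Omega\bigl(\frac{k \log(n/k)}{\ell + \log k}\bigr)$, the claimed bound.

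The argument has no genuinely hard step; the one decision that matters is the choice of the restricted input alphabet, and I expect that to be the only subtle point. Using $\{0,1\}$-valued sparse signals is precisely what yields the stated bound, since it maximizes the number of distinguishable inputs $\binom{n}{k}$ relative to the growth of the per-coordinate output range. Allowing larger integer coefficients would inflate both the input count and the range $2k 2^\ell M + 1$ of each measurement coordinate, and a short calculation shows this only weakens the resulting lower bound on $m$ in the very-sparse regime of interest. The takeaway is that binary $k$-sparse signals already carry enough combinatorial information to force $\Omega\bigl(\frac{k \log(n/k)}{\ell + \log k}\bigr)$ measurements.
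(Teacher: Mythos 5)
Your proposal is correct and is essentially the same argument as the paper's: restrict to binary $k$-sparse vectors, note each measurement coordinate is an integer in $[-k2^\ell, k2^\ell]$, and compare $\binom{n}{k}$ against $(2k2^\ell+1)^m$ via injectivity. The only difference is that you spell out the logarithmic estimates explicitly, which the paper leaves implicit.
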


\begin{proof}
  The matrix $A$ must be in particular able to distinguish binary
  $k$-sparse vectors.  The number of such vectors is
  $\binom{n}{k}$. Let $x$ be a $k$-sparse binary vector and $y :=
  Ax$. Each entry of $y$ must be an integer in range $[-k 2^\ell, k
  2^\ell]$, and the number of vectors in $\R^n$ satisfying this
  property is $(k 2^{\ell+1} + 1)^m$, and this number must be lower
  bounded by the number of $k$-sparse binary vectors. This gives the
  desired bound.
\end{proof}

The above result explains why the entries of our single-measurement
matrix had long binary representations. However, as shown in
\cite{Tarokh}, one can ``break'' this lower bound using Vandermonde
matrices and achieve a total of $2k$ measurements. This special
property of Vandermonde matrices is due to the fact that the entries
of the matrix cannot be represented by bounded precision numbers and
the amount of required precision must necessarily grow to infinity as
$n$ gets large. Hence, Vandermonde matrices use large precision in an
essential way and it becomes a crucial task to quantitatively analyze
the amount of precision that Vandermonde measurements need for making
reliable reconstruction of sparse signals possible.

\section{Problem Definition and Main result}\label{definition}
We consider the problem of recovering a $k$-sparse signal $x \in \R^n$
from $m \ll n$ discretized observations. Let $A$ be an $m \times n$
matrix with possibly complex entries and $\hat{y} := y + e \in \C^m$,
be our observation vector,
% as follows
% \begin{equation*}
%   y=Ax+e,
% \end{equation*}
where $y := Ax$ and $e \in \C^m$ is the truncation noise in the
observation. Throughout the paper we consider the fixed-point binary
representation of real numbers and define precision as follows.
\begin{defn}\label{deltadef}
  We say that a vector $\hat{z}\in \C^n$ is an approximation of $z\in
  \C^n$ within $\ell$ bits of precision (or $\ell$ accurate bits) if
  $\|\hat{z}-z\|_{\infty}/\|z\|_{\infty} < 2^{-\ell}.$
  % \[\frac{\|\hat{z}-z\|_{\infty}}{\|z\|_{\infty}} < 2^{-\ell}.\]
\end{defn}
\begin{defn}\label{def:dynrange}
  The dynamic range of a nonzero vector $x$ is defined as the ratio
  $|x_{\max}/x_{\min}|,$ where $x_{\max}$ and $x_{\min}$ are the
  largest and the smallest nonzero entries of $x$ in absolute value,
  respectively.
\end{defn}
The problem is to find the sufficient precision for $y$ such that we
can ensure that a signal $x$ with known dynamic range can be recovered
with the correct support. This of course depends on the matrix $A$ and
the reconstruction algorithm. We pick as $A$ a Vandermonde matrix with
roots on the unit circle; namely,
\begin{equation*}
  A := \left( \begin{array}{cccc}
      1       &    1& \dots & 1 \\
      a_{1} & a_{2} & \ldots & a_{n} \\
      a_{1}^2 & a_{2}^2& \ldots & a_{n}^2 \\
      \vdots & \vdots & \ddots & \vdots \\
      a_{1}^{m-1} & a_{2}^{m-1} & \ldots & a_{n}^{m-1}
    \end{array} \right),
\end{equation*}
where $a_j:=\exp{(j\frac{2\pi\sqrt{-1}}{n})}$. This choice of $A$ is
motivated by Reed-Solomon codes over finite fields. By the properties
of Reed-Solomon codes if the ``error pattern'' (i.e., the vector $x$)
is $k$-sparse and we pick $m=2k$ then it can be uniquely identified
from the measurement outcomes.  As noted in \cite{Tarokh} this
property holds over the complex field as well and therefore if we use
a Vandermonde matrix with distinct roots as the sensing matrix, we can
exploit an analog of Reed-Solomon decoding algorithm over the complex
field to reconstruct the signal. The Reed-Solomon decoding algorithm
we use is the so-called \emph{syndrome decoding} algorithm, where (for
the case $e = 0$) the measurements $y_0, y_1, \ldots, y_{m-1}$ are
considered as syndromes from which we wish to find the corresponding
error pattern (i.e., the vector $x$).  The decoding algorithm is as
follows. First, we solve the following Toeplitz linear system
\begin{equation}\label{toeplitz}
  \left( \begin{array}{cccc}
      y_0 &  y_1& \dots & y_{k}  \\
      y_1   & y_2 & \ldots & y_{k+1} \\
      \vdots & \vdots & \ddots & \vdots \\
      y_{k-1} & y_{k} & \ldots & y_{2k-1}
    \end{array} \right)  \left( \begin{array}{cccc}
      h_{k}  \\
      h_{k-1}  \\
      \vdots  \\
      h_{0}
    \end{array} \right) = \left( \begin{array}{cccc}
      0  \\
      0  \\
      \vdots  \\
      0
    \end{array} \right)
\end{equation}
for a nonzero solution, and let $h(x):= h_0 + h_1 x+ \cdots + h_k
x^k$. From the theory of Reed-Solomon codes (cf.\ \cite{Coding}) we
know that $h(x)$ is a multiple of the \emph{error locator polynomial}
$L(x):=\prod_{e\in E} (1-xa_{e})$, where $E \subseteq \{1, 2, \ldots,
n\}$ is a set of size at most $k$ containing the error positions
(i.e., the support of $x$). Therefore the set of the zeros of $h(x)$
determines a superset of the error positions (and the exact set if $h$
is a nonzero solution with the smallest degree). Having found a
superset of error locations with size $k$, we can solve a $k \times k$
system of linear equations to find the actual error values.
% So this method achieves the lower bound on the number of ``real''
% measurements given in Lemma~\ref{rank}.

In the presence of truncation noise, it is natural to consider the same
reconstruction method using the truncated syndrome vector $\hat{y}$,
and ask how stable the method is in this situation. The following theorem
quantifies the amount of precision of the measurements needed to
ensure correct recovery of the support:
\begin{thm}\label{theor}
  If we use a $2k \times n$ Vandermonde matrix with roots on the unit
  circle and observe the syndromes within $O(\ell+k\log \frac{n}{k})$
  accurate bits, then we can stably reconstruct any $k$-sparse signal
  with dynamic range at most $2^\ell$. The reconstructed signal has
  the same support as the original signal and approximates its nonzero
  elements within $\ell$ bits of precision.
\end{thm}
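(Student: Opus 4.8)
The plan is to follow the two stages of the syndrome decoding algorithm and track how the truncation noise $e$ propagates through each. The starting observation is that the $s$-th syndrome is $y_s = \sum_{e \in E} x_e\, a_e^{\,s}$, a weighted sum of $k$ complex exponentials whose frequencies $a_e$ live among the $n$-th roots of unity; thus the procedure is exactly a Prony-type recovery, and the whole analysis reduces to a perturbation analysis of it. In both stages the quantity that controls the noise amplification is the conditioning of rectangular Vandermonde matrices whose nodes form a $k$-subset of $\{a_1,\dots,a_n\}$, so I would first isolate and bound this conditioning once and reuse it. The key geometric input is that the candidate positions $a_j^{-1} = \exp(-j\,2\pi\sqrt{-1}/n)$ are equally spaced on the unit circle with separation $2\pi/n$, and the adversarial configuration is $k$ \emph{adjacent} positions, for which the pairwise gaps shrink to $\Theta(1/n)$.

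For the amplitude stage (assuming the support $E$ has been found correctly), the nonzero entries are obtained by $x|_E = V^{-1} y'$, where $V$ is the $k\times k$ Vandermonde matrix with nodes $\{a_e : e\in E\}$ and $y'$ is a length-$k$ subvector of the syndromes. Writing the entries of $V^{-1}$ in their Lagrange-interpolation form, each is a signed symmetric function divided by $\prod_{j\ne i}(a_i-a_j)$, so $\|V^{-1}\|$ is governed by $\min_i \prod_{j\ne i}|a_i-a_j|$. In the worst (adjacent) case this product is $\gtrsim (c/n)^{k-1}(k-1)!$, whose reciprocal has logarithm $O(k\log(n/k))$ after a Stirling estimate $\log((k-1)!) \approx k\log k$. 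Since Definition~\ref{deltadef} demands an absolute amplitude error below $2^{-\ell}\|x\|_\infty = |x_{\min}|$ while the noise is at the level $2^{-p}\|y\|_\infty$ with $\|y\|_\infty \le k|x_{\max}|$, requiring $2^{-p}\,k^{3/2}\,\|V^{-1}\| < 2^{-\ell}$ yields the sufficient precision $p = O(\ell + k\log(n/k))$.

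For the support stage I would analyze the homogeneous Toeplitz system~(\ref{toeplitz}). With exact data its minimal-degree kernel vector is the error-locator $L(x)=\prod_{e\in E}(1-x a_e)$, whose roots are precisely the $a_e^{-1}$. Factoring the syndrome matrix as $S = V_1\, \mathrm{diag}(x_e)\, V_2$ with $V_1,V_2$ Vandermonde in the nodes $a_e$ shows that the smallest nonzero singular value of $S$ obeys $\sigma_k(S)\gtrsim |x_{\min}|\cdot\sigma_{\min}(V_1)\,\sigma_{\min}(V_2)$; the Vandermonde factors contribute the node-conditioning term above, and the weakest exponential contributes the factor $|x_{\min}|$. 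Hence a perturbation of size $\|e\|$ changes the computed kernel vector $\hat h$ by an amount controlled by $\|e\|/\sigma_k(S)$, and because $\|e\|\le 2^{-p}\|y\|_\infty$ with $\|y\|_\infty \sim k|x_{\max}|$, the relative amplification is by the dynamic range $|x_{\max}|/|x_{\min}| = 2^\ell$ times the node conditioning, again logging to $O(\ell + k\log(n/k))$. Finally I would invoke stability of polynomial roots under coefficient perturbation to conclude that every root of $\hat h$ lies within $o(1/n)$ of some $a_e^{-1}$; since distinct candidates are $2\pi/n$ apart, rounding each computed root to the nearest candidate recovers $E$ exactly.

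The main obstacle is the support stage, and within it two coupled difficulties. First, the Vandermonde conditioning must be pinned down sharply enough that its logarithm comes out as $k\log(n/k)$ rather than the cruder $k\log n$; this hinges on exploiting the arithmetic-progression structure of the worst-case adjacent nodes together with the $\log((k-1)!)\approx k\log k$ cancellation, so the estimate genuinely uses that the nodes are roots of unity and not arbitrary points on the circle. Second, the passage from the coefficient perturbation of $\hat h$ to the displacement of its roots is delicate precisely when the support is clustered, since clustered roots are ill-conditioned functions of the coefficients; one must verify that the precision budget $O(\ell + k\log(n/k))$ still forces the roots inside the $\pi/n$ tolerance. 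The amplitude stage, by contrast, is essentially a direct linear-algebra estimate once the Vandermonde inverse bound is in hand.
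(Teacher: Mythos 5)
Your overall architecture is the same as the paper's: both analyses trace the truncation noise through the Prony/syndrome-decoding pipeline, both factor the syndrome Toeplitz matrix as (Vandermonde)\,$\times$\,$\mathrm{diag}(x_e)$\,$\times$\,(Vandermonde) --- the paper writes this as $M_i = D V_k X_k V_k^\top$ in its Step~3 --- and both charge the budget $O(\ell + k\log\frac{n}{k})$ to a conditioning term times the dynamic range. The genuine gap is the final move of your support stage: you recover $E$ by numerically extracting the roots of the perturbed kernel polynomial $\hat h$ and rounding each to the nearest candidate, justified by an appeal to ``stability of polynomial roots under coefficient perturbation.'' You then concede that clustered roots are ill-conditioned functions of the coefficients and leave the verification open (``one must verify\dots''). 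As written, that is a hole at the crux of the theorem's ``same support'' claim: for $k$ adjacent support positions the roots of $L$ are only $2\pi/n$ apart, each root's sensitivity to coefficient noise is of order $1/|L'(a_e^{-1})| = 1/\prod_{j \neq e}|a_e^{-1} - a_j^{-1}|$, i.e., of order $(n/c)^{k-1}/(k-1)!$, and without a quantitative argument (e.g., Rouch\'e's theorem on disks of radius $\pi/n$ around each $a_e^{-1}$, lower-bounding $|L|$ on the disk boundaries by exactly your chord-product estimate and comparing against the $O(k)\|\hat h - L\|_\infty$ evaluation error) you have not shown that your precision budget forces the computed roots into the $\pi/n$ tolerance. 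The estimate does close --- the required coefficient accuracy is again $2^{-O(k\log(n/k))}$ --- but it must actually be carried out; it is not a routine remark.

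The paper's own proof sidesteps this difficulty entirely, and the contrast is instructive: it never runs a root-finder. Instead (Step~1) it evaluates $\hat h$ at the $n$ candidate points and thresholds, using the fact that at any $n$-th root of unity that is \emph{not} a root of $L$ one has $|L(x)| \geq k!\left(\frac{2\pi}{n}\right)^k$ (the factors $|1-xa_e|$ are chords subtending distinct multiples of $2\pi/n$), while the evaluation error caused by coefficient noise is at most $\mathrm{poly}(k)\,2^{-\ell_h}$; hence $\ell_h = O(k\log\frac{n}{k})$ bits in the $h_i$ suffice to separate roots from non-roots by inspection over the finite candidate set. This is the same chord product you use, deployed as a detection threshold rather than as a root-sensitivity bound, and it turns the delicate step into an elementary inequality. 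Two smaller points of divergence: (i) your self-contained Gautschi-type bound $\log\|V^{-1}\| = O(k\log\frac{n}{k})$ replaces the paper's appeal to Lemma~\ref{vandercond} (condition number at most $\sqrt{2k}$ for unit-circle Vandermonde matrices); your worst-case bound is weaker --- the paper gets $O(\ell + \log k)$ bits for the amplitude stage versus your $O(\ell + k\log\frac{n}{k})$ --- but it still lands within the theorem's budget, and it does not depend on that cited lemma. (ii) Before comparing $\hat h$ with the true kernel vector you must fix a normalization (the paper pins a coordinate by setting $h_i = \hat h_i$ and passes to the square system \eqref{lnreqn} plus Corollary~\ref{num}); kernel vectors are defined only up to scale, and the bound $\|e\|/\sigma_k(S)$ you quote controls a subspace angle, not the distance between two arbitrary representatives.
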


The proof of this theorem is sketched in Section~\ref{stability}.

\begin{remark} For the fixed point model that we are considering in
  this work, if the signal has dynamic range $2^\ell$, then we
  obviously need at least $\ell$ bits of precision in the
  reconstructed signal to make sure that all the nonzero entries are
  being recovered with nonzero magnitudes.
\end{remark}

The theorem states that if we want to recover the correct support of a
signal with dynamic range $2^\ell$, $O(\ell+k\log \frac{n}{k})$ bits
of precision per measurement is sufficient. Since we have $2k$
measurements, the total number of measurement bits add up to $O(k\ell
+ k^2\log \frac{n}{k})$.  In particular, if we consider binary signals
for which $\ell=0$, then we will require $O\big(k^2 \log
\frac{n}{k}\big)$ bits in total.  Nevertheless, this bound is $k$
times larger than $O\big(k \log \frac{n}{k}\big)$, the information
theoretical lower bound on the number of required bits to identify
$k$-sparse binary signals.
% We will elaborate on the bounds for the total number of bits in
% Section \ref{comparison}.

% In Section~\ref{comparison} we will show that if we want to ensure
% correct recovery of the support and $k=O(\log n)$, this is better
% than the upper bound on the total number of measurement bits
% required for $\ell_1$ minimization when random Fourier coefficients
% are used for the measurements.  \marginpar{Amin: Here we need to
%   discuss how much larger than the lower bound we are.}

 \section{Sketch of the Proof of Theorem~\ref{theor}}\label{stability}
 In this section we outline the proof of Theorem~\ref{theor} and omit
 certain details due to space restrictions. In particular, we focus on the
 case where the support size of the sparse signal is exactly $k$ or
 known to the decoder\footnote{If the decoder knows the actual support
   size $t$, where $t < k$, it can discard all but $2t$ of the
   syndromes and reduce the problem to the case where the support size
   is exactly half the number of measurements.  Otherwise, it can try
   various possibilities for $t$ and find a list of up to $k$ possible
   reconstructions that includes a correct approximation of $x$. It is
   not hard to see that only one of these reconstructions can
   reproduce the given syndromes within a sufficient precision, and thus,
   the decoder can always uniquely reconstruct $x$.  }. The proof is
 done in three steps. First, we find sufficient precision for the
 $h_i$ so as to be able to detect the positions of nonzero elements,
 namely the roots of the error locator polynomial. Next, an upper
 bound is derived on the required precision for the $y_i$ so as to
 guarantee that the $h_i$ can be solved, with desired precision, from
 the set of linear equations given in~\eqref{toeplitz}. This upper
 bound depends on the condition number of a $k \times k$ matrix with
 entries given by the $y_i$, for which an upper bound is derived in
 the last step.

 {\it Step~1: } In the presence of noise, $h_0, h_1, \ldots, h_k \in
 \C$ are noisy and $h(x)$, assuming that $h_0 = 1$, is not necessarily
 equal to the error locator polynomial $L(x)$. Thus, we need to make
 sure that the error in the $h_i$ is small enough to allow us to
 reliably find the roots of $L(x)$. From the choice of $L(x)=\prod_{e
   \in E} (1-xa_e)$, the minimum nonzero magnitude of $L(x)$ can be
 bounded as
 \[|L(x)| \geq k!\left(\frac{2\pi}{n}\right)^k, \] since each $a_e$
 and the evaluation point $x$ is an $n$th root of unity and the
 quantity $| 1-xa_e |$ is the length of a chord\footnote{Here for the
   sake of clarity we are neglecting the lower order term $O(1/n^2)$
   in the approximation on the length of the chord, but it should be
   clear that this will not affect the analysis.} on the unit circle
 whose corresponding angle is a distinct multiple of $2\pi/n$ (as the
 $a_e$ are distinct).

 Also, the magnitude of the error in evaluation of $h(x)$ can be upper
 bounded by $k^2 2^{-\ell_h}$ if the $h_i$ are available within
 $\ell_h$ bits of precision.  This is because $L(x)$ has $k$ monomials
 with coefficients of magnitude at most $k$.

 Hence, to find the roots of $L(x)$ correctly, it suffices to have
 \[
 k^2 2^{-\ell_h} < k!\left(\frac{2\pi}{n}\right)^{k},
 \]
 which can be satisfied by having the coefficient vector of $h(x)$
 within $\ell_h = O(k\log{\frac{n}{k}})$ bits of precision.

 {\it Step~2: } Now that we have a bound on the precision that we need
 for the $h_i$, we have to calculate the
 precision we need for $y$. We will use the following theorem.
\begin{thm}\cite[Theorem 7.2]{Numerical} %\label{num}
  Let $Ax=b$, where $A$ is a square invertible matrix %with real entries
  and $x$ and $b$ are vectors, and $(A+\Delta
  A)y=b+\Delta b$, where $\|\Delta A\| \leq \epsilon \|E\| $ and
  $\|\Delta b \| \leq \epsilon \|f\|$. The matrix $E$ and the vector
  $f$ are arbitrary and $\|\cdot\|$ is any absolute norm. Furthermore,
  assume that $\epsilon\|A^{-1}\| \ \|E\| < 1$. Then
  \begin{equation*}
    \frac{\|x-y \|}{\|x \|} \leq \frac{\epsilon}{1-\epsilon\|A^{-1}\| \ \|E\|}\left(\frac{\|A^{-1}\| \ \|f\|}{\|x\|}+\|A^{-1}\| \ \|E\|\right).
  \end{equation*} %\qed
\end{thm}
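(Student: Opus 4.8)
The plan is to derive the estimate directly, by subtracting the two linear systems and isolating the error vector $x - y$. Starting from $(A + \Delta A)y = b + \Delta b$ and substituting $b = Ax$, I would rearrange to express the residual as $A(y - x) = \Delta b - \Delta A\, y$. Multiplying on the left by $A^{-1}$ and taking norms, submultiplicativity of the (absolute) norm gives
\[
  \|y - x\| \;\le\; \|A^{-1}\|\,\bigl(\|\Delta b\| + \|\Delta A\|\,\|y\|\bigr).
\]
This is the natural first inequality; everything that follows is a matter of rewriting it in terms of the quantities appearing in the statement.

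The key subtlety is that the right-hand side still contains $\|y\|$, the norm of the unknown perturbed solution, rather than $\|x\|$. To eliminate it I would apply the triangle inequality $\|y\| \le \|x\| + \|y - x\|$, then substitute the hypotheses $\|\Delta A\| \le \epsilon\|E\|$ and $\|\Delta b\| \le \epsilon\|f\|$. Abbreviating $c := \epsilon\,\|A^{-1}\|\,\|E\|$ and collecting the resulting self-referential $\|y - x\|$ term on the left produces
\[
  \|y - x\|\,(1 - c) \;\le\; \epsilon\,\|A^{-1}\|\,\|f\| + c\,\|x\|.
\]
The hypothesis $\epsilon\,\|A^{-1}\|\,\|E\| < 1$ is precisely what makes this manageable: it guarantees $1 - c > 0$, so I may divide through by $1 - c$ without reversing the inequality and without the bound degenerating.

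Dividing once more by $\|x\|$ and factoring out $\epsilon$ then yields exactly the stated estimate, with the prefactor $\epsilon/(1 - c)$ and the two terms $\|A^{-1}\|\,\|f\|/\|x\|$ and $\|A^{-1}\|\,\|E\|$ inside the parentheses. I expect the only genuine obstacle to be the bookkeeping in the collection step, namely ensuring that the $\|y - x\|$ contribution arising from $\|\Delta A\|\,\|y\|$ is correctly transferred to the left-hand side before dividing; this is where the implicit dependence on the perturbed solution is resolved. No deeper machinery is required, since the whole argument rests only on the triangle inequality, submultiplicativity of the norm, and the single scalar rearrangement guarded by the assumption $c < 1$.
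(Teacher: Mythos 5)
Your derivation is correct: subtracting the two systems to get $y - x = A^{-1}(\Delta b - \Delta A\, y)$, bounding $\|y\|$ by $\|x\| + \|y-x\|$, and absorbing the self-referential term using $\epsilon\|A^{-1}\|\,\|E\| < 1$ is exactly the standard argument, and it is the one given in the source the paper cites (Higham's Theorem 7.2) --- the paper itself states this result by citation without reproducing a proof. Nothing is missing; the only implicit assumptions (consistency of the matrix norm with the vector norm, and $x \neq 0$ so that division by $\|x\|$ is legitimate) are the same ones the statement itself relies on.
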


By picking $E:=A$, $f:=b$ and $\|\cdot\|_{\infty}$ as the norm function,
we obtain the following corollary:

\begin{cor} \label{num}
  For a square matrix $A$ let $Ax=b$ and $(A+\Delta A)y=b+\Delta b$,
  where $\|\Delta A\|_{\infty} \leq \epsilon \|A\|_{\infty} $ and
  $\|\Delta b \|_{\infty} \leq \epsilon \|b\|_{\infty}$, and assume
  that $\epsilon \kappa_{\infty}(A) \leq \frac{1}{2}$. Then
  \begin{equation*}
    \frac{\|x-y \|_{\infty}}{\|x \|_{\infty}} \leq 4\epsilon \kappa_{\infty}(A), % + o(\epsilon^2)
  \end{equation*}
  where $\kappa_{\infty}(A)$ denotes the condition number of the matrix
  $A$ with respect to the $\infty$-norm.
\end{cor}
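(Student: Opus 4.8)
The plan is to obtain Corollary~\ref{num} as a direct specialization of the cited perturbation theorem (Theorem~7.2 of \cite{Numerical}), followed by two elementary estimates that exploit the consistency relation $b = Ax$ and the hypothesis $\epsilon \kappa_\infty(A) \le \tfrac{1}{2}$. The corollary is essentially a bookkeeping consequence rather than a new phenomenon, so the work lies in choosing the free parameters $E$ and $f$ well and simplifying cleanly.

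First I would instantiate the theorem with $E := A$, $f := b$, and $\|\cdot\| := \|\cdot\|_\infty$. With this choice the hypotheses $\|\Delta A\| \le \epsilon\|E\|$ and $\|\Delta b\| \le \epsilon\|f\|$ become exactly the assumptions of the corollary, and the quantity $\|A^{-1}\|\,\|E\|$ collapses to $\|A^{-1}\|_\infty\,\|A\|_\infty = \kappa_\infty(A)$. In particular the theorem's admissibility condition $\epsilon\|A^{-1}\|\,\|E\| < 1$ reads $\epsilon\kappa_\infty(A) < 1$, which is implied by the stronger assumption $\epsilon\kappa_\infty(A) \le \tfrac{1}{2}$, so the theorem applies and yields
\[
\frac{\|x-y\|_\infty}{\|x\|_\infty} \le \frac{\epsilon}{1 - \epsilon\kappa_\infty(A)}\left(\frac{\|A^{-1}\|_\infty\,\|b\|_\infty}{\|x\|_\infty} + \kappa_\infty(A)\right).
\]

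The only term not yet expressed through $\kappa_\infty(A)$ is the ratio $\|A^{-1}\|_\infty\,\|b\|_\infty / \|x\|_\infty$. Here I would use the consistency relation $b = Ax$ together with submultiplicativity of the operator norm: $\|b\|_\infty = \|Ax\|_\infty \le \|A\|_\infty\,\|x\|_\infty$, whence $\|A^{-1}\|_\infty\,\|b\|_\infty / \|x\|_\infty \le \|A^{-1}\|_\infty\,\|A\|_\infty = \kappa_\infty(A)$. Substituting this estimate makes both summands inside the parentheses equal to $\kappa_\infty(A)$, giving
\[
\frac{\|x-y\|_\infty}{\|x\|_\infty} \le \frac{2\epsilon\kappa_\infty(A)}{1 - \epsilon\kappa_\infty(A)}.
\]

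Finally I would clear the denominator using $\epsilon\kappa_\infty(A) \le \tfrac{1}{2}$, which forces $1 - \epsilon\kappa_\infty(A) \ge \tfrac{1}{2}$ and hence $1/\bigl(1 - \epsilon\kappa_\infty(A)\bigr) \le 2$; this turns the last display into the claimed bound $4\epsilon\kappa_\infty(A)$. None of the steps presents a genuine obstacle, since the argument is a substitution followed by routine norm inequalities. The only point deserving slight care is recognizing that bounding $\|b\|_\infty/\|x\|_\infty \le \|A\|_\infty$ directly (rather than expanding $\|x\| = \|A^{-1}b\|$) is what keeps the estimate symmetric and lets the factor-of-two hypothesis absorb the denominator into the single clean constant $4$.
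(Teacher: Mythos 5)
Your proposal is correct and follows exactly the route the paper intends: the paper's entire derivation of Corollary~\ref{num} is the remark that one should instantiate the cited perturbation theorem with $E:=A$, $f:=b$, and the $\infty$-norm, and your write-up simply fills in the routine details (using $b=Ax$ to bound $\|A^{-1}\|_\infty\|b\|_\infty/\|x\|_\infty \le \kappa_\infty(A)$, then absorbing the denominator via $\epsilon\kappa_\infty(A)\le \tfrac{1}{2}$). The arithmetic yielding the constant $4$ checks out, so there is nothing to correct.
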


%  \begin{thm}\cite[Theorem 7.2]{Numerical}\label{num}
%    For a square real matrix $A$ let $Ax=b$ and $(A+\Delta A)y=b+\Delta
%    b$, where $\|\Delta A\|_{\infty} \leq \epsilon \|A\|_{\infty} $ and
%    $\|\Delta b \|_{\infty} \leq \epsilon \|b\|_{\infty}$, and assume
%    that $\epsilon \kappa_{\infty}(A) \leq \frac{1}{2}$. Then $\|x-y
%    \|_{\infty}/\|x \|_{\infty} \leq 4\epsilon \kappa_{\infty}(A)$,
%    % \begin{equation*}
%    %   \frac{\|x-y \|_{\infty}}{\|x \|_{\infty}} \leq 4\epsilon
%    %   \kappa_{\infty}(A), % + o(\epsilon^2)
%    % \end{equation*}
%    where $\kappa_{\infty}(A)$ denotes the condition number of matrix
%    $A$ with respect to the $\infty$-norm.
%  \end{thm}
 The corollary states that if we wish to obtain the solution $\hat{x}$
 of a linear system $Ax=b$ up to $r$ accurate bits, i.e.,
 $\|\hat{x}-x \|_{\infty} / \|x \|_{\infty} < 2^{-r}$, it
 suffices to have $\|\Delta A\|_{\infty} \leq \epsilon \|A\|_{\infty}
 $ and $\|\Delta b \|_{\infty} \leq \epsilon \|b\|_{\infty}$ with
 $\epsilon=O(2^{-(r+\log\kappa_{\infty}(A))})$.
 % In our problem the entries of matrix $A$ and vector $b$ are chosen
 % from the syndromes $y_0, \ldots, y_{2k-1}$. Since $$we require a
 % relative error of $O(2^{-(r+\log\kappa_{\infty}(A))})$ in the
 % syndromes, $r+\log\kappa_{\infty}(A)+O(1)$ correct significant bits
 % in the fixed-point representation of $y_{\min}$, the smallest
 % nonzero element of $y$, would be sufficient to achieve the desired
 % precision.

From Step~1, we know that we need $O(k\log{\frac{n}{k}})$ bits of
precision for the $h_i$. Now suppose that the decoder receives
a perturbed version of the syndrome vector and finds a nonzero
solution for the (perturbed) system of linear equations \eqref{toeplitz},
namely, $\hat{h} := (\hat{h}_0, \ldots, \hat{h}_k)$. We show that there is a solution
$h := (h_0, \ldots, h_k)$ for the original system \eqref{toeplitz}
that is sufficiently close to the perturbed solution, i.e.,
$\| \hat{h} - h \|_\infty / \| h \|_\infty < 2^{-\Omega(k\log{\frac{n}{k}})}$,
as required by Step~1.

\newcommand{\ymax}{y_{\mathrm{max}}}
Denote by $M$ the coefficient matrix of \eqref{toeplitz} (prior to
the perturbation of syndromes),
and by $M_i$ the $k \times k$ minor of $M$ obtained by removing
the column corresponding to $h_i$. Moreover, define $\ymax$ as the
largest syndrome in absolute value so that $|\ymax| = \|y\|_\infty$, and
note that each $M_i$ contains all the syndromes but one.
It is always possible to set $h_i = \hat{h}_i$,
for some choice of $i$ such that $M_i$ contains 
an entry with magnitude $|\ymax|$, and we can rewrite \eqref{toeplitz}
as
\begin{equation} \label{lnreqn}
\begin{split}
M_i (h_k, \ldots, h_{i+1}, h_{i-1}, \ldots, h_0)^\top = \hspace{2cm}
\\ -\hat{h}_i(y_{k-i}, y_{k-i+1}, \ldots, y_{2k-i-1})^\top. 
\end{split}
\end{equation}

Here we mention a technicality that the precision of the $y_i$ is
bounded relative to the largest coefficient $\ymax$, which is not
necessarily present on the right hand side of the above
system. However, we can add an additional ``dummy'' equation $\ymax
h_i = \hat{h}_i \ymax$ to the system and ensure that the
requirements of Corollary~\ref{num} on the error bounds are fulfilled.
It is easy to see that the new system will have condition number at most
$\max\{\kappa_\infty(M_i), k\}.$ We can now apply Corollary~\ref{num}
on the system given by \eqref{lnreqn} and find a sufficient precision for the $y_i$.
In particular, we conclude that 
 \begin{equation}
   O(k\log(n/k))+ \log\kappa_{\infty}(M_i) \label{bits}
 \end{equation}
 bits of precision for the $y_i$ would be sufficient for finding the $h_i$
 within the precision required by Step~1, and thus, the
 correct support of the sparse vector.

%  From Step~1 we know that we need $O(k\log{\frac{n}{k}})$ bits of
%  precision for the $h_i$. Since we are interested in the roots of
%  $h(x)$, we can fix one of the $h_i$ (e.g., $h_0=1$) and rewrite
%  \eqref{toeplitz} as $Mh = b$, where $M$ is a $k\times k$ minor of the
%  the coefficient matrix of \eqref{toeplitz}, $h$ is the vector of
%  unknown coefficients of $h(x)$, and $b$ is a vector of
%  syndromes. Then we apply Theorem~\ref{num} (ignoring some
%  technicalities) and conclude that
%  \begin{equation}
%    O(k\log(n/k))+ \log\kappa_{\infty}(M) \label{bits}
%  \end{equation}
%  bits of precision for $y_i$ would be sufficient for finding the
%  correct support of the sparse vector.

 {\it Step~3: } For the last step, we find a good upper bound on
 $\kappa_{\infty}(M_i)$. We know that $\kappa_{\infty}(M_i) \leq
 \sqrt{k}\cdot\kappa(M_i), $ where $\kappa(\cdot)$ denotes the
 condition number with respect to the $\ell_2$-norm. Thus we
 equivalently upper bound $\kappa(M_i)$.  It is straightforward to see that 
 $M_i$ can be decomposed as $M_i = D V_k X_k V_k^\top$, where $X_k$ is
 a $k \times k$ diagonal matrix containing the nonzero coefficients of
 the sparse vector $x$ on its diagonal, $V_k$ is a $k \times k$
 Vandermonde matrix with roots on the unit circle, and $D$ is a
 diagonal (and unitary) $k \times k$ matrix containing appropriate
 powers of the $a_i$.  Obviously, $\kappa(D) = 1$. Moreover, as the
 dynamic range of $x$ is bounded by $2^\ell$, we have that
 $\kappa(X_k) = |x_{\mathrm{max}}|/|x_{\mathrm{min}}| \leq 2^\ell.$
 Moreover, we use the following lemmas:

\begin{lem}\cite{Vandermonde} \label{vandercond}
  The condition number of any complex $k \times k$ Vandermonde matrix $V$
  with roots on the unit circle is at most $\sqrt{2k}$.
\end{lem}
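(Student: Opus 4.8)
The plan is to control the two extreme singular values of $V$ and bound their ratio. Since $\kappa(V)=\sigma_{\max}(V)/\sigma_{\min}(V)$, where $\sigma_{\max}$ and $\sigma_{\min}$ denote the largest and smallest singular values, I would work with the Hermitian positive-definite Gram matrix $G:=V^\ast V$, whose eigenvalues are exactly the $\sigma_i(V)^2$. Writing the $k$ roots as $z_j=e^{\sqrt{-1}\,\theta_j}$ on the unit circle, a direct computation gives $G_{ab}=\sum_{\ell=0}^{k-1}(\overline{z_a}z_b)^\ell=\sum_{\ell=0}^{k-1} e^{\sqrt{-1}\,\ell(\theta_b-\theta_a)}$, so that every diagonal entry equals $k$ and each off-diagonal entry is a Dirichlet-kernel sum $G_{ab}=e^{\sqrt{-1}(k-1)(\theta_b-\theta_a)/2}\,\sin(k(\theta_b-\theta_a)/2)/\sin((\theta_b-\theta_a)/2)$. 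Thus the problem reduces to bounding $\lambda_{\max}(G)$ from above and $\lambda_{\min}(G)$ from below.

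Equivalently, for a coefficient vector $c$ the Rayleigh quotient is $c^\ast G c=\sum_{j=1}^k|P_c(z_j)|^2$, where $P_c(z):=\sum_{\ell=0}^{k-1} c_\ell z^\ell$ is a polynomial of degree below $k$ and $\|c\|_2^2$ is the sum of its squared coefficients. Hence $\sigma_{\max}^2$ and $\sigma_{\min}^2$ are the largest and smallest values of $\sum_j|P(z_j)|^2$ over all such $P$ normalized to unit coefficient norm. The upper bound on $\sigma_{\max}$ is the easy direction: since $|z_j|=1$, the off-diagonal mass $\rho:=\max_a\sum_{b\ne a}|G_{ab}|$ controls $\lambda_{\max}(G)\le k+\rho$ by Gershgorin's theorem, or, equivalently, a large-sieve estimate bounds $\sum_j|P(z_j)|^2$ by the coefficient norm times a factor governed by the spacing of the $\theta_j$.

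The crux is the lower bound on $\sigma_{\min}$, that is, showing $\lambda_{\min}(G)\ge k-\rho$ stays bounded away from $0$, after which $\sigma_{\max}^2/\sigma_{\min}^2\le(k+\rho)/(k-\rho)$ and the claimed ratio $\kappa(V)\le\sqrt{2k}$ follows once $\rho$ is controlled. This is exactly the step that uses the distribution of the roots on the circle: the off-diagonal Dirichlet sums must exhibit enough cancellation for $\rho$ to remain below the required threshold. In the fully symmetric case where the $z_j$ are equally spaced, each nonzero difference $\theta_b-\theta_a$ is a multiple of $2\pi/k$, so $\sin(k(\theta_b-\theta_a)/2)=0$ and $G=kI$, giving $\kappa(V)=1$ at once; the general bound follows by a quadrature/large-sieve argument that quantifies this cancellation in terms of the minimal angular separation of the roots. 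I expect this lower bound on the least singular value to be the main obstacle, since without a spacing guarantee the Gram matrix can become nearly singular; the content of the cited estimate is precisely that, for roots distributed on the unit circle as in our construction, $\rho$ never exceeds the level needed to force $\kappa(V)\le\sqrt{2k}$.
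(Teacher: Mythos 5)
The paper contains no proof of this lemma for you to be measured against: it is imported wholesale, as a citation, from the reference \cite{Vandermonde}. So your attempt has to stand on its own, and it does not. After setting up the Gram matrix $G = V^\ast V$, computing its Dirichlet-kernel entries, and writing the Gershgorin bounds $\lambda_{\max}(G) \le k + \rho$ and $\lambda_{\min}(G) \ge k - \rho$, you hand the only hard step --- showing that the off-diagonal mass $\rho$ (equivalently, the least singular value) is under control --- back to ``the cited estimate'' and ``a quadrature/large-sieve argument.'' That step \emph{is} the lemma; everything you actually carry out is the routine part, so the argument is circular in the sense that it reduces the claim to itself.

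More seriously, the step you defer cannot be supplied under the hypotheses as stated, because the lemma is false for arbitrary nodes on the unit circle, and your own framework exhibits this. Take $k = 2$ with nodes $z_1 = 1$ and $z_2 = e^{\sqrt{-1}\,\theta}$: the Gram matrix has diagonal entries $2$ and off-diagonal entry $1 + e^{\sqrt{-1}\,\theta}$, hence eigenvalues $2 \pm 2\cos(\theta/2)$, giving $\kappa(V) = \cot(\theta/4) \approx 4/\theta$, which exceeds $\sqrt{2k} = 2$ as soon as $\theta$ is small. In general, as nodes cluster, each $|G_{ab}| \to k$, so $\rho \to (k-1)k$ and the Gershgorin lower bound $k - \rho$ becomes negative, i.e., vacuous; complete cancellation ($G = kI$) occurs only when the nodes form a rotation of the full set of $k$-th roots of unity. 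Any correct proof therefore must assume a quantitative angular separation of the nodes (on the order of $1/k$), a hypothesis missing from the statement --- and, notably, not guaranteed where the paper applies the lemma, since there the $k$ nodes are the $n$-th roots of unity indexed by the (possibly clustered) support of the sparse signal, with $n \gg k$. Your closing remark that without a spacing guarantee the Gram matrix can become nearly singular is exactly right; it is the reason your proposed route, and indeed any route, cannot close this gap without strengthening the hypotheses.
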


\begin{lem}
  If $Q$ and $R$ are square matrices with complex entries, then $\kappa(QR) \leq
  \kappa(Q)\kappa(R)$.
\end{lem}

The second lemma is easy to derive and we omit its proof. Altogether, 
we conclude that $\kappa(M_i) \leq k 2^{\ell+1}$, which, combined
with Step~2, implies that it suffices to have the $y_i$
within $O\left(k\log{\frac{n}{k}}+ \ell\right)$ bits to correctly reconstruct the
support of $x$.

After we find the correct support, the reconstruction problem is
reduced to a $k \times k$ system of linear equations defined by $k$
columns of the Vandermonde measurement matrix and the corresponding
measurement outcomes. As $k$ columns of a Vandemonde matrix also form a
Vandemonde matrix, by Lemma~\ref{vandercond} the condition number of
the matrix defining the equations is at most
$\sqrt{2k}$. Hence, again using Corollary~\ref{num}, knowing the measurement outcomes within $O(\log
k+\ell)$ bits would be sufficient for obtaining $\ell$ bits of
precision in the reconstruction of $x$. However, this number is less
than the bound that we derived before for finding the correct support of $x$. 
This concludes the proof.

 \section{A General Bound on Precision}\label{comparison}

 In the preceding section we obtained a bound on the amount of
 precision required for the measurements obtained from a Vandermonde
 measurement matrix for ensuring reliable recovery of the sparse
 signal using a particular reconstruction algorithm, namely, syndrome
 decoding.  In this section, we consider a similar problem, but for a
 general class of measurement matrices satisfying a suitable
 \emph{restricted isometry property (RIP)} and considering convex
 optimization as the recovery method.  
%% extra comment:
We remark that, as in the case
 of Vandermonde measurements, our main focus here is on the amount of
 \emph{information} that needs to be extracted from the measurement outcomes,
 and we do not take the imprecision of numerical computations into account.
 In particular, we assume that the reconstruction algorithm uses an
 idealized computation model, but receives truncated measurement
 outcomes on its input.
%%%%%%%%%%%%%%%%%

 The bound that we obtain in this section is a direct corollary of a
 result by Candes et al.\ on robust recovery of sparse signals from
 inaccurate measurements \cite{Candes-Stable}. We begin by recalling
 this result and the required notation.  For a complex $m \times n$
 matrix $A$ (where $m \leq n$) and positive integer $s$, denote by the
 \emph{$s$-isometry constant $\delta_s$} the infimum over all choices
 of $\delta$ that satisfy $ (1-\delta) \| c \|_2^2 \leq \| A' c \|_2^2
 \leq (1+\delta) \| c \|_2^2, $ for every $m \times s$ submatrix $A'$
 of $A$ and every $c \in \C^s$.  The following is the main result
 proved in \cite{Candes-Stable}:

 \begin{thm} \label{l1stab} Suppose that the measurement matrix $A$
   satisfies the restricted isometry property that $\delta_{3s} + 3
   \delta_{4s} < 2$, for some positive integer $s$. Then for every
   $\eps > 0$, every $s$-sparse signal $x \in \C^n$, and $y := Ax +
   e$, $\| e \|_2 \leq \eps$, the solution $\hat{x}$ to the convex
   program
   \[
   (P_2) \qquad \min \|x'\|_1, \qquad \qquad \|A x'-y\|_2 \leq \epsilon,
   \]
   satisfies $\| \hat{x} - x \| \leq C \eps$, where $C$ is a
   positive constant only depending on $\delta_{4s}$. %\qed
 \end{thm}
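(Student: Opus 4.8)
The plan is to control the error vector $h := \hat{x} - x$ directly, exploiting two facts that follow at once from the hypotheses: a \emph{tube} constraint and a \emph{cone} constraint. Since $x$ itself is feasible for $(P_2)$ (because $\|Ax - y\|_2 = \|e\|_2 \le \eps$), optimality of $\hat{x}$ gives $\|\hat{x}\|_1 \le \|x\|_1$; writing $T_0$ for the support of $x$ (so $|T_0| \le s$) and splitting the $\ell_1$ norm across $T_0$ and its complement yields the cone constraint $\|h_{T_0^c}\|_1 \le \|h_{T_0}\|_1$. Simultaneously, both $x$ and $\hat{x}$ lie in the feasible tube, so $\|Ah\|_2 \le \|A\hat{x}-y\|_2 + \|Ax-y\|_2 \le 2\eps$. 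The whole argument then reduces to converting these two constraints, together with the RIP, into a bound $\|h\|_2 \le C\eps$.

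First I would set up the standard \emph{shelling} decomposition of $T_0^c$: order the coordinates of $h$ on $T_0^c$ by decreasing magnitude and cut them into consecutive blocks $T_1, T_2, \ldots$ of size $2s$, so that $T_{01} := T_0 \cup T_1$ has size $3s$ and each union $T_1 \cup T_j$ has size $4s$, matching the isometry constants in the hypothesis. Because the entries in each block are dominated by the average of the previous block, one gets the tail estimate $\sum_{j \ge 2} \|h_{T_j}\|_2 \lesssim (2s)^{-1/2}\|h_{T_0^c}\|_1$, and feeding in the cone constraint together with Cauchy--Schwarz on $T_0$ turns this into $\sum_{j \ge 2}\|h_{T_j}\|_2 \le \rho\,\|h_{T_{01}}\|_2$ for an explicit $\rho < 1$. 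This says the ``tail'' of $h$ is governed by its restriction to the largest $3s$ coordinates, so it suffices to bound $\|h_{T_{01}}\|_2$.

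To bound $\|h_{T_{01}}\|_2$ I would expand $\|Ah_{T_{01}}\|_2^2 = \langle Ah_{T_{01}}, Ah\rangle - \sum_{j \ge 2} \langle Ah_{T_{01}}, Ah_{T_j}\rangle$ using $Ah_{T_{01}} = Ah - \sum_{j\ge2}Ah_{T_j}$. The left-hand side is bounded below by $(1-\delta_{3s})\|h_{T_{01}}\|_2^2$ via the lower RIP inequality applied to the $3s$-element set $T_{01}$. The first inner product is handled by Cauchy--Schwarz and the tube constraint, contributing a term of order $\eps\,\|h_{T_{01}}\|_2$. Each cross term is handled by the near-orthogonality lemma that the RIP implies: for vectors $u,v$ supported on disjoint sets, $|\langle Au, Av\rangle| \le \delta_{|\mathrm{supp}(u)|+|\mathrm{supp}(v)|}\,\|u\|_2\|v\|_2$, applied after splitting $T_{01}$ into $T_0$ and $T_1$ so that only $\delta_{3s}$ and $\delta_{4s}$ enter. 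Collecting everything gives a self-bounding inequality of the shape $(1-\delta_{3s})\|h_{T_{01}}\|_2 \le C'\eps + \rho'\,\|h_{T_{01}}\|_2$, where $\rho'$ is an explicit positive combination of $\delta_{3s}$ and $\delta_{4s}$.

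The main obstacle is precisely the bookkeeping in this last step: the coefficient $\rho'$ on the right must be strictly smaller than the $(1-\delta_{3s})$ on the left, and it is exactly the hypothesis $\delta_{3s} + 3\delta_{4s} < 2$ that guarantees this, so the constants cannot be estimated crudely (for instance, replacing every $\delta_{3s}$ by $\delta_{4s}$ or discarding the exact identity $\|h_{T_0}\|_2^2 + \|h_{T_1}\|_2^2 = \|h_{T_{01}}\|_2^2$ would lose the sharp threshold). Once the inequality is solved for $\|h_{T_{01}}\|_2 \le C\eps$, the shelling bound upgrades this to the full vector via $\|h\|_2 \le \|h_{T_{01}}\|_2 + \sum_{j\ge2}\|h_{T_j}\|_2 \le (1+\rho)\|h_{T_{01}}\|_2$, and since $\rho$, $\rho'$, and $C'$ all depend only on the isometry constants, using $\delta_{3s}\le\delta_{4s}$ the final constant $C$ depends only on $\delta_{4s}$, as claimed.
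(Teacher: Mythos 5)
First, a point of reference: the paper never proves this theorem itself; it is quoted, with citation, as the main result of \cite{Candes-Stable}, so your proposal can only be compared against the original Candes--Romberg--Tao (CRT) argument. Your proof is correct in outline, but its second half follows a genuinely different route. The two arguments share the opening moves: feasibility of $x$ gives the tube constraint $\|Ah\|_2 \le 2\eps$ for $h := \hat{x}-x$, optimality gives the cone constraint $\|h_{T_0^c}\|_1 \le \|h_{T_0}\|_1$, and then one shells $T_0^c$ by decreasing magnitude. But CRT shell into blocks of size $3s$, so that $|T_{01}| \le 4s$ and $|T_j| \le 3s$, and finish with a single triangle inequality, $2\eps \ge \|Ah\|_2 \ge \bigl(\sqrt{1-\delta_{4s}} - \sqrt{(1+\delta_{3s})/3}\bigr)\|h_{T_{01}}\|_2$, using no inner products at all; the hypothesis $\delta_{3s}+3\delta_{4s}<2$ is \emph{exactly} the positivity of the bracketed constant, which is where its odd-looking form comes from. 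You instead take blocks of size $2s$ and run the inner-product/restricted-orthogonality expansion --- the mechanism of the Candes--Tao decoding-by-linear-programming paper and of Candes's later $\delta_{2s}<\sqrt{2}-1$ refinement, not of \cite{Candes-Stable}. Your route does close: with the tail bound $\sum_{j\ge 2}\|h_{T_j}\|_2 \le \|h_{T_0}\|_2/\sqrt{2}$ and the optimization over $\|h_{T_0}\|_2^2+\|h_{T_1}\|_2^2=\|h_{T_{01}}\|_2^2$ that you rightly insist on, the cross terms are at most $\rho'\|h_{T_{01}}\|_2^2$ with $\rho' = \bigl(\delta_{3s}+\sqrt{\delta_{3s}^2+\delta_{4s}^2}\bigr)/(2\sqrt{2})$, and one checks $\delta_{3s}+\rho' \le 0.93 < 1$ everywhere on the region $\{\delta_{3s}\le\delta_{4s},\ \delta_{3s}+3\delta_{4s}\le 2\}$, the maximum occurring at $\delta_{3s}=\delta_{4s}=1/2$. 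Your approach even buys something CRT's does not: your constant stays uniformly bounded as $\delta_{3s}+3\delta_{4s}\to 2$, where CRT's blows up; in exchange, their argument is shorter and needs no near-orthogonality lemma.

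Two claims in your write-up are inaccurate, though fixable. First, it is not ``exactly the hypothesis'' that makes your self-bounding coefficient work: your bookkeeping has a different natural threshold, strictly weaker than CRT's (for instance $\delta_{3s}=\delta_{4s}=0.52$ passes your test while violating $\delta_{3s}+3\delta_{4s}<2$), and the verification that the CRT region sits inside yours genuinely uses the monotonicity $\delta_{3s}\le\delta_{4s}$; exactness is a feature of the original proof only. Second, your closing step --- substituting $\delta_{3s}\le\delta_{4s}$ to make $C$ a function of $\delta_{4s}$ alone --- is precisely the crude replacement you warned against: it drives your denominator negative once $\delta_{4s}$ exceeds roughly $0.54$, while the hypothesis allows $\delta_{4s}$ up to $2/3$. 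The correct conclusion is the uniform bound above, which lets you take $C$ to be an absolute constant over the whole hypothesis region.
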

 % The theorem states that one can recover any sparse signal within an
 % error whose size is proportional to the noise level in the
 % measurements. It is also pointed out in \cite{Candes-Stable} that
 % this is the best one can hope for. Additionally it is

 As pointed out in \cite{Candes-Stable}, the program $(P_2)$ can be
 solved efficiently using known techniques from convex optimization.
 % Hence, our upper bound on $\ell'$ is valid with respect to a solver
 % for $(P_2)$ without considering a particular implementation of
 % it. Nevertheless, the bound shows the amount of precision that is
 % information-theoretically sufficient for a satisfactory
 % reconstruction of the signal.
% 
 Now suppose in the sequel that $A$ is an $m \times n$ matrix
 satisfying the isometry property needed above, and that we wish to
 reconstruct an approximation $\hat{x}$ of a $k$-sparse signal $x \in
 \C^n$ within at least $\ell$ significant bits (in the fixed-point
 model) from $y := Ax + e$, where $e$ is the rounding error, such that
 $\hat{x}$ has the same support as that of $x$. Obviously, for that to
 be possible the dynamic range of $x$ must be at most $2^\ell$, as
 otherwise a nonzero but small coefficient of $x$ might be confused
 with zero.
 % Thus, we would like to satisfy
 % \begin{equation}
 %   \frac{\|\tilde{x}-x\|_{\infty}}{\|x\|_{\infty}} <
 %   2^{-\ell}.\label{relx}
 % \end{equation}
 The following straightforward corollary of Theorem~\ref{l1stab}
 quantifies the amount of precision needed for $y$:

 \begin{cor} \label{coro} Let $e \in \C^m$ denote the quantization
   error in $y$. In order to ensure that $\|\hat{x}-x\|_{\infty}/
   \|x\|_{\infty} < 2^{-\ell}$, it suffices to have
   \begin{equation*}
     \frac{\|e\|_{\infty}}{\|y\|_{\infty}} < \frac{2^{-\ell}}{C k \sqrt{m}}. \label{err1}
   \end{equation*}
 \end{cor}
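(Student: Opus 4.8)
The plan is to derive the bound $\|e\|_\infty/\|y\|_\infty < 2^{-\ell}/(Ck\sqrt{m})$ by chaining Theorem~\ref{l1stab} with elementary norm inequalities, translating the $\ell_2$-guarantee of the convex program into the $\ell_\infty$-approximation required by Definition~\ref{deltadef}.

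First I would invoke Theorem~\ref{l1stab} with $s=k$: under the stated RIP, the recovered $\hat x$ satisfies $\|\hat x - x\|_2 \le C\eps$, where $\eps$ is any bound on $\|e\|_2$ used in $(P_2)$. Since we want an $\ell_\infty$ guarantee, I would pass from $\ell_2$ to $\ell_\infty$ via $\|\hat x - x\|_\infty \le \|\hat x - x\|_2 \le C\|e\|_2$. Thus it suffices to guarantee $C\|e\|_2 < 2^{-\ell}\|x\|_\infty$.

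Next I would bound the two quantities appearing in the target inequality in terms of $\|e\|_2$ and $\|x\|_\infty$. On one side, $\|e\|_2 \le \sqrt{m}\,\|e\|_\infty$, since $e \in \C^m$. On the other side I would relate $\|y\|_\infty$ back to $\|x\|_\infty$: because the Vandermonde/RIP matrix entries are on the unit circle and $x$ is $k$-sparse, each coordinate of $y = Ax$ is a sum of at most $k$ terms each of magnitude at most $\|x\|_\infty$, so $\|y\|_\infty \le k\|x\|_\infty$, i.e. $\|x\|_\infty \ge \|y\|_\infty/k$. Substituting these into $C\|e\|_2 < 2^{-\ell}\|x\|_\infty$ gives $C\sqrt{m}\,\|e\|_\infty < 2^{-\ell}\,\|y\|_\infty/k$, which rearranges to exactly the claimed $\|e\|_\infty/\|y\|_\infty < 2^{-\ell}/(Ck\sqrt{m})$. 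Running the chain of implications in the forward direction then shows this condition is sufficient.

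The main obstacle, and the step warranting the most care, is the lower bound $\|x\|_\infty \ge \|y\|_\infty/k$: one must justify that every row of $A$ has unit-modulus entries (true for the Fourier/Vandermonde case, and assumable in general after normalization) so that the triangle inequality over the $k$-element support yields the factor of $k$. A subtler point is that $(P_2)$ is solved with the true $\ell_2$ error radius $\eps = \|e\|_2$, so I would make explicit that the decoder uses $\eps := \sqrt{m}\,\|e\|_\infty$ as a valid upper bound, ensuring $x$ remains feasible for $(P_2)$ and Theorem~\ref{l1stab} applies verbatim. Everything else is routine norm manipulation.
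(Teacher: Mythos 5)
The paper never writes out a proof of Corollary~\ref{coro} (it is dismissed as a ``straightforward corollary'' of Theorem~\ref{l1stab}), and your derivation is precisely the straightforward chain the authors intend: $\|\hat x - x\|_\infty \le \|\hat x - x\|_2 \le C\eps$ with the computable radius $\eps := \sqrt{m}\,\|e\|_\infty \ge \|e\|_2$ (your remark that the decoder must run $(P_2)$ with this $\eps$ so that $x$ stays feasible is a real technicality, handled correctly), followed by $\|x\|_\infty \ge \|y\|_\infty/k$; this is exactly where the factors $C$, $\sqrt{m}$ and $k$ in the stated bound come from.

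The one step whose justification does not hold up as written is $\|y\|_\infty \le k\|x\|_\infty$. Unit-modulus entries are fine for the Fourier/Vandermonde case, but the corollary is stated for an \emph{arbitrary} matrix satisfying the RIP hypothesis of Theorem~\ref{l1stab} (e.g., Gaussian), and your fallback --- ``assumable in general after normalization'' --- is not sound: rescaling $A$ by a constant $c$ turns the isometry sandwich into $c^2(1\pm\delta)\|c'\|_2^2$, so it changes the isometry constants and can destroy the hypothesis $\delta_{3s}+3\delta_{4s}<2$ that Theorem~\ref{l1stab} needs. The repair is to extract the inequality from the RIP itself: each column satisfies $\|Ae_j\|_2 \le \sqrt{1+\delta_1} < \sqrt{3}$, so all entries have modulus below $\sqrt{3}$; or, more directly, $\|Ax\|_\infty \le \|Ax\|_2 \le \sqrt{(1+\delta_k)k}\,\|x\|_2/\|x\|_2 \cdot \|x\|_2 \le \sqrt{(1+\delta_k)k}\cdot\sqrt{k}\,\|x\|_\infty/\sqrt{k} = \sqrt{(1+\delta_k)k}\,\|x\|_\infty < \sqrt{3k}\,\|x\|_\infty \le k\,\|x\|_\infty$ for $k \ge 3$, using $\|x\|_2 \le \sqrt{k}\,\|x\|_\infty$ for $k$-sparse $x$ and $\delta_k \le \delta_{3k} < 2$. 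A second, smaller slip is that the observed vector is $y = Ax + e$, not $Ax$, so one actually gets $\|y\|_\infty \le k\|x\|_\infty + \|e\|_\infty$; since the hypothesis forces $\|e\|_\infty \le \|y\|_\infty/(Ck\sqrt{m})$, this only perturbs the conclusion by a constant factor, which is consistent with the corollary's loose treatment of constants. With these substitutions your argument goes through and matches the intended proof.
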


 The result states that if we have the measurements within $O(\ell
 +\log k + \log m)=O(\ell + \log m)$ bits of precision, we can ensure
 that the program $(P_2)$ obtains a reconstruction that approximates
 $x$ within $\ell$ bits of precision, which in particular implies
 correct support recovery of $x$. Hence, the total number of bits
 needed from the measurements can be upper bounded by $O(m(\ell + \log
 m))$.

 As a concrete example, consider a measurement matrix $A$ that outputs
 a set of $m$ random Fourier coefficients of the signal. It is shown
 in \cite{Candes-Stable} that, in order for $A$ to satisfy the RIP
 needed by Corollary~\ref{coro} with overwhelming probability, it is
 sufficient to take $m = O(k (\log n)^6)$. Thus in this case, $O(k
 (\log n)^6 (\ell + \log k + \log\log n))$ bits from the measurement
 vector $y$ would be sufficient to reconstruct $x$ within precision
 $\ell$. On the other hand, the upper bound that we obtained for the
 Vandermonde matrix with syndrome decoding is a total of $O(k\ell +
 k^2 \log (n/k))$ bits from $y$. The two bounds are incomparable, but
 the latter is better for very sparse signals (e.g., $k = O(\log n)$).

 \bibliographystyle{IEEEtran} \bibliography{bibliography}

\end{document}